\newtheorem{theo}{Theorem}[section]
\newtheorem{lem}[theo]{Lemma}
\newtheorem{prop}[theo]{Proposition}
\newtheorem{cor}[theo]{Corollary}
\newtheorem{lemma}[theo]{Lemma}
 \theoremstyle{definition}
\newtheorem{definition}[theo]{Definition}
\newtheorem{example}[theo]{Example}
 \theoremstyle{remark}
 \numberwithin{equation}{section}
\newtheorem{remark}[theo]{Remark}
\newcommand{\betheo}{\begin{theo}$\!\!\!${\bf } }
\newcommand{\entheo}{\end{theo}}
\newcommand{\becor}{\begin{cor}$\!\!\!$  }
\newcommand{\encor}{\end{cor}}
\newcommand{\belem}{\begin{lem}$\!\!\!${\bf .} }
\newcommand{\enlem}{\end{lem}}
\newcommand{\beprop}{\begin{prop}$\!\!\!${\bf } }
\newcommand{\enprop}{\end{prop}}
\newcommand{\bedefi}{\begin{definition}$\!\!\!$ \rm }
\newcommand{\findefi}{ \end{definition}}
\newcommand{\beex}{\begin{example}$\!\!\!$ \rm }
\newcommand{\enex}{ \end{example}}
\newcommand{\berem}{\begin{remark}$\!\!\!$ \rm }
\newcommand{\enrem}{ \end{remark}}
\newcommand{\be}{\begin{equation}}
\newcommand{\en}{\end{equation}}
\newcommand{\bea}{\begin{eqnarray}}
\newcommand{\ena}{\end{eqnarray}}
\newcommand{\beano}{\begin{eqnarray*}}
\newcommand{\enano}{\end{eqnarray*}}
\newcommand{\bee}{\begin{enumerate}}
\newcommand{\ene}{\end{enumerate}}
\newcommand{\bei}{\begin{itemize}}
\newcommand{\eni}{\end{itemize}}
\newcommand{\betab}{\begin{tabular}}
\newcommand{\entab}{\end{tabular}}
\newcommand{\bd}{\begin{displaymath}}
\newcommand{\h}{{\mathfrak H}}
\newcommand{\ad}{^{\mbox{\scriptsize $\dag$}}}
\newcommand{\das}{^{\dag {\rm\textstyle *}}}
\newcommand{\up}{\raisebox{0.7mm}{$\upharpoonright $}}%
\newcommand{\pow}[1]{^{\mbox{\scriptsize $(#1)$}}}
\def\B{{\mathcal B}}
\def\D{{\mathcal D}}
\def\E{{\mathcal E}}
\def\F{{\mathcal F}}
\def\H{{\mathcal H}}
\def\L{{\mathcal L}}
\def\M{{\mathcal M}}
\def\N{{\mathcal N}}
\def\J{\relax\ifmmode {\mathcal J}\else${\mathcal J}$\fi}
\def\x{\relax\ifmmode {\mbox{*}}\else*\fi}
\def\MM{{\mathfrak M}}
\newcommand{\mc}{\mathcal}
\newcommand{\mb}{\mathbb}
\newcommand{\mult}{{\scriptstyle \Box}}
\newcommand{\pa}{partial \mbox{*-algebra}}
\newcommand{\LD}{{\L}\ad(\D)}
\newcommand{\LDH}{{\L}\ad(\D,\H)}
\def\dag{\dagger}
\newcommand{\ip}[2]{\left\langle {#1}\left|{#2}\right.\right\rangle}
\def\OL{\relax\ifmmode {\sf L}\else{\textsf L}\fi}
\def\OR{\relax\ifmmode {\sf R}\else{\textsf R}\fi}
\newcommand{\Id}{1\!\!1}
\newcommand{\ov}{\overline}
\begin{document}

\title[Weak commutation relations]{Weak commutation relations of unbounded operators and applications}

\author[F.Bagarello]{Fabio Bagarello}
\address{%
Dipartimento di Metodi e Modelli Matematici,
Universit\`a di Palermo,
90128 Palermo,
Italy}
\email{bagarell@unipa.it}

\author[A.Inoue]{Atsushi Inoue}
\address{Department of Applied Mathematics, Fukuoka University, Fukuoka 814-0180, Japan}
\email{a-inoue@fukuoka-u.ac.jp}

\author[C.Trapani]{Camillo Trapani}
\address{Dipartimento di Matematica e Informatica, Universit\`a di Palermo, I-90123 Palermo, Italy}
\email{trapani@unipa.it}
\maketitle

\begin{abstract}
Four possible definitions of the commutation relation $[S,T]=\Id$ of two closable unbounded operators $S,T$ are compared. The {\em weak} sense of this commutator is given in terms of the inner product of the Hilbert space $\H$ where the operators act. Some consequences on the existence of eigenvectors of two number-like operators are derived and the partial O*-algebra generated by $S,T$ is studied. Some applications are also considered.
\end{abstract}

\section{Introduction}\label{sec_0}
Giving a meaning to a formal commutation relation $[A,B]=C$, when $A,B,C$ are unbounded operators, can really be a touchy business. It is well known that even the fact that two operators commute can be understood in several different ways giving rise to deeply different conclusions. Nelson's example \cite{reedsimon} provides a beautiful instance where $A,B$ are essentially selfadjoint on a common invariant dense domain $\D$, they commute (i.e. $AB\xi=BA\xi$, for every $\xi \in \D$) but the corresponding spectral families (or, equivalently, the corresponding unitary groups they generate) do not commute. The situation becomes even more involved if we want to express a commutation relation $[A,B]=C$, with $C \neq 0$. Commutators for unbounded operators can easily be meaningless. However, in many concrete applications, like in Quantum Theories, they play a so relevant role to deserve a full-fledged mathematical consideration. In recent papers by one of us \cite{bagpb1}-\cite{bagrep}, generalizing the commutation relations for bosons $[a, a^\dagger]=\Id$, the more general case $[a,b]=\Id$, where $b$ is not the adjoint of $a$, has been considered and several interesting results on these {\em pseudo-bosons} have been derived, in particular for what concerns the existence and the behavior of bases of eigenvectors of two non self-adjoint operators.

Very likely, the most relevant link between {\em commutators} and {\em operators} is provided by the representation theory of infinite-dimensional Lie algebras. Let us in fact consider the Heisenberg Lie algebra $\mathfrak h$ generated by three elements $a, b, c$ whose Lie brackets are
$$ [a,b]=c, \quad [a,c]=[b,c]=0.$$
Representations of this Lie algebra have been extensively studied in the literature giving rise to a considerable amount of papers and monograph (we refer to \cite{barutracza, schmudgen} and references therein). One of the most relevant results in this topic is the Stone -  von Neumann theorem which states that every {\em integrable} representation of $\mathfrak h$ is unitarily equivalent to the Schr\"odinger representation given by the {\em annihilation} operator $a$ and by the {\em creation} operator $a^\dagger$, which satisfy the boson commutation relation $[a, a^\dagger]=\Id$. Thus studying operators $A,B$, with $B\neq A^\dagger$, which satisfy, in some sense, the commutation relation $[A,B]=\Id$ corresponds, finally, to analyzing certain {\em nonintegrable} representations of $\mathfrak h$. We will not however pursue this line.

The paper is organized as follows. In Section \ref{sec_1} we discuss and compare four different definitions of the commutation relation $[S,T]=\Id$, the operators $S,T$ being picked in the maximal {\em partial O*-algebra} $\LDH$ on a dense domain $\D$ of Hilbert space $\H$. To be more definite we recall that $ \L\ad(\D,\H) $ denotes
the set of all (closable) linear operators $X$ such that $ {D}(X) = {\D},\; {D}(X\x) \supseteq {\D}.$ The set $
\L\ad(\D,\H ) $ is a  \pa\
 with respect to the usual sum $X_1 + X_2 $,
the scalar multiplication $\lambda X$, the involution $ X \mapsto X\ad := X\x \up {\D}$ and the \emph{(weak)}
partial multiplication $X_1 \mult X_2 = {X_1}\ad\x X_2$, defined whenever $X_2$ is a weak right multiplier of
$X_1$ (we shall write $X_2 \in R^{\rm w}(X_1)$ or $X_1 \in L^{\rm w}(X_2)$), that is, whenever $ X_2 {\D} \subset
{\D}({X_1}\ad\x)$ and  $ X_1\x {\D} \subset {\D}(X_2\x)$. Any $\dagger-$invariant subspace $\MM$ of $\LDH$ stable under the weak multiplication is called a partial O*-algebra. By $\LD$ we will denote the maximal {\em O*-algebra} on $\D$ consisting of all element $X \in \LDH$ which leave, together with their adjoint, the domain $\D$ invariant. A *-subalgebra of $\LD$ is named an O*-algebra.

Among the possible definitions, we focus our attention, in particular, to the {\em weak} form of the commutation relation $[S,T]=\Id$, which is given in terms of the inner product of $\H$. This choice reveals to be sufficient for the analysis of the existence of eigenvectors and for the construction of intertwining operators considered in Section \ref{sec_consequences}.

The commutation relation $[S,T]=\Id$ (even when $T\neq S\ad$) plays a relevant role in concrete applications to quantum mechanics, and  have strong connections with what in the physical literature is called {\em pseudo-hermitian quantum mechanics}, (see \cite{bagrep} and references therein).

Section \ref{sec_4} is devoted to the construction of the partial O*-algebra generated by two operators $S,T$ satisfying the commutation relation $[S,T]=\Id$ in weak sense. The main outcome is that this partial O*-algebra decomposes into a regular part (a partial *-algebra of polynomials in the variables $S, T, S\ad, T\ad$) and a singular part whose control is more difficult. This is closely reminiscent of similar results discussed in \cite[Ch.3]{ait_book} for the case of commuting operators.

Finally, in Section \ref{sec_5} we discuss some more applicative aspects. In particular we derive two types of {\em uncertainty relations} for two operators $S,T$ satisfying the commutation relation $[S,T]=\Id$ in weak sense. These two uncertainty relations are proven to be independent by showing examples where one of them possesses a state of minimal uncertainty and the other not and viceversa.

\section{The commutation relation $[A,B]=\Id$} \label{sec_1}

Let $A,B$ be two closed operators with dense domain $D(A)$ and $D(B)$, respectively, in Hilbert space $\H$.
We begin with a discussion about the mathematical meaning of the formal commutation relation $[A,B]=\Id$.
Of course we want the identity $AB-BA=\Id$ to hold at least on a dense domain $\D$ of Hilbert space $\H$.
For this we require that there exists a dense subspace $\D$ of $\H$ such that
\begin{itemize}
\item[(D.1)] $\D \subset D(AB)\cap D(BA)$;
\item[(D.2)] $AB\xi-BA\xi = \xi, \quad \forall \xi \in \D$,
where, as usual, $D(AB)=\{ \xi \in D(B):\, B\xi \in D(A)\}$.
\end{itemize}
These two conditions do not provide any information about $D(A^*)$ and $D(B^*)$ apart from the fact that they are dense subspaces of $\H$. To get a better control on these two operators we suppose
\begin{itemize}
\item[(D.3)] $\D \subset D(A^*)\cap D(B^*)$.
\end{itemize}
By the previous assumptions it follows that the operators $S:=A\upharpoonright \D$ and $T:= B\upharpoonright D$ are elements of the partial *-algebra $\LDH$. It is readily checked that the operators $S$ and $T$ satisfy the equality
\begin{equation}\label{eq_starting} \ip{T\xi}{S\ad \eta}-\ip{S\xi}{T\ad \eta}= \ip{\xi}{\eta}, \quad \forall \xi, \eta \in \D.\end{equation}

It is then natural to consider \eqref{eq_starting} as a possible definition of the commutation relation $[S,T]=\Id_{\D}$ (where $\Id_{\D}$ denotes the identity operator of $\D$) for a couple of elements of $\LDH$.

\begin{lemma} \label{lemma_11}Let $S, T\in \LDH$ satisfy \eqref{eq_starting}. Then, if
 $S\mult T$ (resp. $T\mult S$)  is well defined then $T\mult S$ (resp. $S\mult T$) is well-defined and
$$ S\mult T - T\mult S =\Id_{\D}.$$
\end{lemma}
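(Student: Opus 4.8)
The plan is to unwind the definition of the weak partial multiplication in $\LDH$ and then use the identity \eqref{eq_starting} to transfer the domain conditions from one product to the other. Recall that $S \mult T = S\ad{}\x T$ is well defined precisely when $T \in R^{\rm w}(S)$, i.e. when $T\D \subset D(S\ad{}\x) = D(\ov{S\ad})$ and $S\x\D = \ov{S}\D \subset D(T\x) = D(\ov{T\ad})$; since $S,T \in \LDH$ already have $\D \subset D(S\x)\cap D(T\x)$, the substantive content of $S\mult T$ being well defined is the pair of inclusions $T\D \subset D(\ov{S\ad})$ and $S\D \subset D(\ov{T\ad})$ (I am using $S$ for $\ov{S}$ where it acts on $\D$). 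Symmetrically, $T \mult S$ is well defined iff $S\D \subset D(\ov{T\ad})$ and $T\D \subset D(\ov{S\ad})$. So in fact the two sets of conditions are \emph{identical}, and the first assertion — that $S\mult T$ well defined forces $T\mult S$ well defined — is essentially immediate once the definitions are spelled out carefully. I would state this cleanly: the conditions $T\D \subset D(\ov{S\ad})$ and $S\D \subset D(\ov{T\ad})$ are exactly what is needed for \emph{both} products.

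Next I would establish the commutator identity itself. Assume $S\mult T$ (hence also $T\mult S$) is well defined. For $\xi \in \D$ and any $\eta \in \D$ we compute
\begin{equation*}
\ip{(S\mult T)\xi}{\eta} = \ip{S\ad{}\x T\xi}{\eta} = \ip{T\xi}{S\ad\eta},
\end{equation*}
where the last equality uses $T\xi \in D(\ov{S\ad}) = D(S\ad{}\x)$ together with $\eta \in \D \subset D(S\ad)$ and $S\ad\eta = \ov{S\ad}{}\x{}\x\eta$ restricted appropriately — more plainly, $\ip{\ov{S\ad}{}\x T\xi}{\eta} = \ip{T\xi}{\ov{S\ad}\eta} = \ip{T\xi}{S\ad\eta}$ since $S\ad\eta = \ov{S\ad}\eta$ for $\eta\in\D$. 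Likewise $\ip{(T\mult S)\xi}{\eta} = \ip{S\xi}{T\ad\eta}$. Subtracting and invoking \eqref{eq_starting} gives
\begin{equation*}
\ip{(S\mult T)\xi - (T\mult S)\xi}{\eta} = \ip{T\xi}{S\ad\eta} - \ip{S\xi}{T\ad\eta} = \ip{\xi}{\eta}
\end{equation*}
for all $\eta \in \D$. Since $\D$ is dense in $\H$, this forces $(S\mult T)\xi - (T\mult S)\xi = \xi$ for every $\xi \in \D$, i.e. $S\mult T - T\mult S = \Id_\D$ as operators on $\D$. The case in which $T\mult S$ is assumed well defined first is symmetric, with the roles of the two products (equivalently, of the two pairs $(S,T)$ and $(T,S)$) interchanged, and \eqref{eq_starting} used with $\xi,\eta$ swapped and conjugated — note \eqref{eq_starting} is manifestly antisymmetric under this swap, so the same conclusion follows.

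The only point that needs genuine care — and the place I would expect a careful referee to look — is the bookkeeping with closures and adjoints: making sure that $D(S\ad{}\x)$ really is $D(\ov{S\ad})$ and that $\ip{S\ad{}\x \zeta}{\eta} = \ip{\zeta}{S\ad\eta}$ is legitimate for $\zeta = T\xi$ in the domain and $\eta \in \D$. This is standard for closable operators (the adjoint of a densely defined operator equals the adjoint of its closure, and $\ip{X\x\zeta}{\eta} = \ip{\zeta}{X\eta}$ holds whenever $\zeta \in D(X\x)$ and $\eta \in D(X)$), so the argument is routine; there is no real obstacle, and the lemma is essentially a matching exercise between the partial-multiplication conventions and the sesquilinear identity \eqref{eq_starting}.
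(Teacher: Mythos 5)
Your derivation of the identity $S\mult T - T\mult S=\Id_{\D}$ \emph{once both products are known to be well defined} is correct and coincides with what the paper does. The gap is in the first half: your claim that the well-definedness conditions for $S\mult T$ and $T\mult S$ ``are identical'' is false, and it rests on a misreading of the partial multiplication. In the paper's conventions $\x$ denotes the Hilbert-space adjoint, so the second defining condition for $S\mult T$, namely $S\x\D\subset D(T\x)$, reads $S\ad\D\subset D(T^*)$ (the image of $\D$ under $S^*$), \emph{not} $\ov{S}\,\D\subset D(\ov{T\ad})$ as you wrote; similarly $D((S\ad)\x)=D(S\das)$, which is the domain of the adjoint of $S\ad$, not of its closure $\ov{S\ad}=(S\ad)^{**}$. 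With the correct reading, $S\mult T$ is well defined iff $T\D\subset D(S\das)$ and $S\ad\D\subset D(T^*)$, while $T\mult S$ is well defined iff $S\D\subset D(T\das)$ and $T\ad\D\subset D(S^*)$ --- four distinct conditions, not two. (Had the two sets of conditions really coincided, the first assertion of the lemma would hold without any appeal to \eqref{eq_starting}; that should have been a warning sign.)

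The implication therefore requires an argument, and \eqref{eq_starting} is precisely what supplies it. Assuming $S\mult T$ is well defined, rewrite \eqref{eq_starting} as $\ip{S\xi}{T\ad\eta}=\ip{(S\mult T)\xi-\xi}{\eta}$; for fixed $\xi\in\D$ the right-hand side is continuous in $\eta$, hence $S\xi\in D((T\ad)^*)=D(T\das)$, which is the first condition for $T\mult S$. For the second, use $S\ad\eta\in D(T^*)$ to write $\ip{S\xi}{T\ad\eta}=\ip{T\xi}{S\ad\eta}-\ip{\xi}{\eta}=\ip{\xi}{T^*S\ad\eta-\eta}$, which is continuous in $\xi$, hence $T\ad\eta\in D(S^*)$. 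This is exactly the paper's proof; your write-up omits it entirely, and without it the first assertion of the lemma is unproved.
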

\begin{proof} Assume, for instance, that $S \mult T$ is well-defined.
Then, we have, for every $\xi, \eta \in \D$,
\begin{eqnarray*}
\ip{S\xi}{T\ad \eta}&=& \ip{T\xi}{S\ad \eta} - \ip{\xi}{\eta}\\
&=& \ip{(S\mult T)\xi}{\eta}- \ip{\xi}{\eta}.
\end{eqnarray*}
Hence, $S: \D \to D(T\das)$. Similarly one shows that $T\ad \eta \in D(S^*)$, for every $\eta \in \D$. Thus $T\mult S$ is also well defined and the equality $ S\mult T - T\mult S =\Id_{\D}$ follows immediately.
The proof under the assumption that $T\mult S$ is well-defined is analogous.
\end{proof}

\berem If $S, T$ are the restrictions of $A$, $B$ as above then we get a stronger result: if either $S\ad:\D \to D(T^*)$ or $T\ad: \D \to D(S^*)$, then
 $S\mult T$ and $T\mult S$ are both well-defined and
$ S\mult T - T\mult S =\Id_{\D}.$ This follows from the fact that (D.1) implies, evidently, that $T\D \subset D(A)\subset D(S\das)$ and $ S\D\subset D(B)\subset D(T\das)$\enrem

Before going forth we summarize some properties of semigroups of bounded operators.

\medskip
Let $t\mapsto V(t)$, $t\geq 0$, be a semigroup of bounded operators in Hilbert space. We recall that $V$ is weakly continuous if
$$ \lim_{t\to {t_0}}\ip{V(t)\xi}{\eta}= \ip{V(t_0)\xi}{\eta} , \quad \forall \xi, \eta \in \H.$$

A closed operator $X$ is the generator of $V(t)$ if
$$ D(X)= \left\{\xi\in \H;\exists\, \xi'\in \H: \lim_{t\to 0} \ip{\frac{V(t)-\Id}{t}\xi}{\eta}=\ip{\xi'}{\eta},\; \forall \eta \in \H \right\}$$
and
$$ X\xi = \xi', \quad \forall \xi \in D(X).$$

If $V(t)$ is a weakly continuous semigroup, then $V^*(t)$ is also a weakly continuous semigroup and if $X$ is the generator of $V(t)$, then $X^*$ is the generator of $V^*(t)$.

\berem In the previous discussion the assumption that $V(t)$ is a weakly continuous semigroup can be replaced with the assumption that $V(t)$ is strongly continuous since every weakly continuous semigroup is also strongly continuous and its generator is densely defined \cite[Ch.IX, Sect. 1]{yosida}.
\enrem

We will say that an operator $X_0\in \LDH$ is the $\D$-generator of a semigroup $V(t)$ if $V(t)$ is generated by some closed extension $X$ of $X_0$ such that $\overline{X_0}\subset X \subset X_0\das$.
The latter condition ensures us that if $X_0$ is the $\D$-generator of $V(t)$, then $X_0\ad$ is the $\D$-generator of $V^*(t)$.

\medskip
At the light of the previous discussion we give the following
\bedefi \label{defn_commutators}Let $S, T \in \LDH$. We say that
\begin{itemize}
\item[(CR.1)] the commutation relation $[S,T]=\Id_{\D}$ is satisfied (in $\LDH$) if, whenever $S\mult T$  is well-defined, $T\mult S$ is well-defined too and $S\mult T-T\mult S= \Id_{\D}$.
\item[(CR.2)] the commutation relation $[S,T]=\Id_{\D}$ is satisfied in {\em weak sense} if
$$ \ip{T\xi}{S\ad \eta}-\ip{S\xi}{T\ad \eta}= \ip{\xi}{\eta}, \quad \forall \xi, \eta \in \D.$$
\item[(CR.3)] the commutation relation $[S,T]=\Id_{\D}$ is satisfied in {\em quasi-strong sense} if $S$ is the    $\D$-generator of a weakly continuous semigroup of bounded operators $V_S(\alpha)$ and
    $$\ip{V_S(\alpha)T\xi}{\eta}-\ip{V_S(\alpha)\xi}{T\ad \eta}=\alpha\ip{V_S(\alpha)\xi}{\eta}, \quad \forall \xi, \eta \in \D, \forall \alpha \geq 0.$$
\item[(CR.4)] the commutation relation $[S,T]=\Id_{\D}$ is satisfied in {\em strong sense} if $S$ and $T$ are $\D$-generators of weakly continuous semigroups of bounded operators $V_S(\alpha), V_T(\beta)$, respectively, satisfying the generalized Weyl commutation relation
    $$ V_S(\alpha)V_T(\beta) = e^{\alpha\beta} V_T(\beta)V_S(\alpha), \quad \forall \alpha, \beta \geq 0.$$
\end{itemize}
\findefi

\berem If (CR.3) holds, then one can easily prove that , for every $\alpha \geq 0$, $V_S(\alpha):\D\to D(T\das)$ and
$V_S(\alpha)^*:\D\to D(T^*)$. Hence $V_S(\alpha)\mult T$ and $T \mult V_S(\alpha)$ are both well-defined (we use the same notation for $V_S(\alpha)$ and for its restriction to $\D$) and we have
$$ V_S(\alpha)\mult T - T \mult V_S(\alpha)= \alpha V_S(\alpha).$$

\enrem

\begin{prop}Let $S, T \in \LDH$. The following implications hold: {\rm (CR.4)} $\Rightarrow$ {\rm (CR.3)} $\Rightarrow$ {\rm(CR.2)} $\Rightarrow$ {\rm (CR.1)}.

\end{prop}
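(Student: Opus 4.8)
The plan is to prove the three implications separately, from the weakest relation to the strongest, each time reducing the statement to a differentiation of the defining identity at parameter value $0$. The implication (CR.2) $\Rightarrow$ (CR.1) needs no new work: (CR.2) is exactly the hypothesis \eqref{eq_starting} of Lemma \ref{lemma_11}, whose conclusion is precisely that whenever one of $S\mult T$, $T\mult S$ is well-defined so is the other, with $S\mult T-T\mult S=\Id_\D$, i.e.\ (CR.1).

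For (CR.3) $\Rightarrow$ (CR.2), fix $\xi,\eta\in\D$. Since $T\in\LDH$ one has $\ip{T\xi}{\eta}=\ip{\xi}{T\ad\eta}$; subtracting this (vanishing) quantity from the relation in (CR.3) and dividing by $\alpha>0$ gives
\[
\ip{\tfrac{V_S(\alpha)-\Id}{\alpha}(T\xi)}{\eta}-\ip{\tfrac{V_S(\alpha)-\Id}{\alpha}\xi}{T\ad\eta}=\ip{V_S(\alpha)\xi}{\eta}.
\]
Rewriting the first term as $\ip{T\xi}{\tfrac{V_S(\alpha)^*-\Id}{\alpha}\eta}$ and letting $\alpha\to0^+$, the right-hand side tends to $\ip{\xi}{\eta}$ by weak continuity. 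On the left, I use that $S$ is the $\D$-generator of $V_S$, so the generator $X$ of $V_S$ satisfies $\overline S\subset X\subset S\das$ and hence $X\xi=S\xi$, while (by the fact recalled above) $S\ad$ is the $\D$-generator of $V_S^*$, whose generator is $X^*$, so $X^*\eta=S\ad\eta$; the definition of the weak generator then forces the two difference quotients to converge to $\ip{T\xi}{S\ad\eta}$ and $\ip{S\xi}{T\ad\eta}$, which is exactly (CR.2).

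For (CR.4) $\Rightarrow$ (CR.3), keep $\alpha\ge0$ fixed and differentiate the generalized Weyl relation in $\beta$ at $\beta=0$. Pairing $V_S(\alpha)V_T(\beta)\xi=e^{\alpha\beta}V_T(\beta)V_S(\alpha)\xi$ with $\eta\in\D$, subtracting $\ip{V_S(\alpha)\xi}{\eta}$ and dividing by $\beta>0$, the left-hand side equals $\ip{\tfrac{V_T(\beta)-\Id}{\beta}\xi}{V_S(\alpha)^*\eta}$, which tends to $\ip{T\xi}{V_S(\alpha)^*\eta}=\ip{V_S(\alpha)T\xi}{\eta}$ because $T$ is the $\D$-generator of $V_T$ and $\xi\in\D$. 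The right-hand side I split by the product rule for difference quotients, $\tfrac{e^{\alpha\beta}g(\beta)-g(0)}{\beta}=\tfrac{e^{\alpha\beta}-1}{\beta}g(\beta)+\tfrac{g(\beta)-g(0)}{\beta}$ with $g(\beta)=\ip{V_T(\beta)V_S(\alpha)\xi}{\eta}=\ip{V_S(\alpha)\xi}{V_T(\beta)^*\eta}$: the first summand tends to $\alpha\ip{V_S(\alpha)\xi}{\eta}$ by weak continuity, and the second to $\ip{V_S(\alpha)\xi}{T\ad\eta}$, using that $T\ad$ is the $\D$-generator of $V_T^*$ and $\eta\in\D$. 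Equating the limits yields the identity of (CR.3).

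The only delicate point, and it is essentially bookkeeping, is the justification of the weak limits above: one must check that $\xi$ and $\eta$ lie in the domains of the relevant generators and of their adjoints, and that on $\D$ these generators act precisely as $S$, $T$, $S\ad$, $T\ad$. This is exactly the role played by the definition of ``$\D$-generator'' ($\overline{X_0}\subset X\subset X_0\das$) together with the already noted consequence that $X_0\ad$ is then the $\D$-generator of $V^*$; with those identifications in place, each implication reduces to a single differentiation.
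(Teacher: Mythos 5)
Your proof is correct and follows essentially the same route as the paper: each implication is obtained by differentiating the defining identity at parameter value $0$ (moving the difference quotients onto the adjoint semigroup where needed), and (CR.2) $\Rightarrow$ (CR.1) is delegated to Lemma \ref{lemma_11} exactly as in the paper. The splitting $e^{\alpha\beta}V_T^*(\beta)-\Id$ into an $(e^{\alpha\beta}-1)$ part and a $(V_T^*(\beta)-\Id)$ part is the same device the paper uses, and your closing remark on domains is precisely the role the paper assigns to the notion of $\D$-generator.
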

\begin{proof} Assume that (CR.4) holds. Then if $\xi, \eta \in \D$ we have
\begin{eqnarray*}
\ip{V_S(\alpha)T\xi}{\eta}&=& \ip{T\xi}{V_S^*(\alpha) \eta}\\
&=& \ip{\lim_{\beta \to 0}\frac{V_T(\beta)-\Id_{\D}}{\beta}\xi}{V_S^*(\alpha) \eta}\\
 &=&  \lim_{\beta \to 0}\ip{V_S(\alpha)\frac{V_T(\beta)-\Id_{\D}}{\beta}\xi}{ \eta}
\\&=&  \lim_{\beta \to 0}\ip{\frac{e^{\alpha\beta}V_T(\beta)V_S(\alpha)-V_S(\alpha)\xi}{\beta}}{\eta}\\
&=&  \lim_{\beta \to 0}\ip{V_S(\alpha)\xi}{ \frac{e^{\alpha\beta}V_T^*(\beta)-\Id}{\beta}\eta}\\
&=& \lim_{\beta \to 0}\ip{V_S(\alpha)\xi}{ \frac{e^{\alpha\beta}V_T^*(\beta)-e^{\alpha\beta}\Id +e^{\alpha\beta}\Id-\Id}{\beta}\eta}\\
&=& \ip{V_S(\alpha)\xi}{T\ad\eta}+\alpha\ip{V_S(\alpha)\xi}{\eta}.
\end{eqnarray*}
Thus (CR.3) holds.

Now assume that (CR.3) holds. Then if $\xi, \eta \in \D$ we have
\begin{eqnarray*}
\ip{T\xi}{S\ad\eta}&=& \lim_{\alpha \to 0}\ip{T\xi}{\frac{V_S^*(\alpha)-\Id}{\alpha} \eta}\\
&=& \lim_{\alpha \to 0}\ip{\frac{V_S(\alpha)-\Id}{\alpha}T\xi}{\eta}\\
&=&\lim_{\alpha \to 0} \left\{\ip{\frac{V_S(\alpha)-\Id}{\alpha}\xi}{T\ad\eta}+\ip{V_S(\alpha)\xi}{\eta}  \right\}\\
&=&\ip{S\xi}{T\ad \eta}+\ip{\xi}{\eta}.
\end{eqnarray*}
(CR.2)$\Rightarrow$(CR.1): this is nothing but Lemma \ref{lemma_11}.

\end{proof}

\berem The implications in the other direction do not hold, in general, also in the case where stronger assumptions on the operators $S,T$ or on the semigroups they generate are made. For instance, there exist two essentially selfadjoint operators $P,Q$ defined on a common invariant dense domain $\D$ such that $PQ\xi -QP\xi =-i\xi$, for $\xi \in \D$, but the unitary groups $U_P(t), U_Q(s)$ generated by $\overline{P}$, $\overline{Q}$ do not satisfy the Weyl commutation relation $U_P(t)U_Q(s)=e^{its}U_Q(s)U_P(t)$, $s,t \in {\mb R}$, see \cite{fuglede} or \cite[VIII.5]{reedsimon}. For a complete analysis of the canonical commutation relations (for symmetric operators) we refer to \cite{schmud_hei1, schmud_hei2}.\enrem

\berem \label{rem_comm_adj}It is easily seen that if $[S,T]=\Id_{\D}$ in anyone of the senses (CR.1), (CR.2) or (CR.4) of Definition \ref{defn_commutators}, then $[T\ad, S\ad]=\Id_{\D}$ in the same sense.\enrem

\section{Some consequences of the weak commutation relation}\label{sec_consequences}
\subsection{Existence of eigenvectors}\label{subsec_eigenvectors}
Let $S,T \in \LDH$ satisfy the commutation relation $[S,T]=\Id_{\D}$ in weak sense. Assume that there exists a vector $0\neq \xi_0\in \D$ such that $S\xi_0=0$. Then
\begin{equation}\label{two}\ip{T\xi_0}{S\ad \eta}= \ip{\xi_0}{\eta}, \quad \forall \eta \in \D.\end{equation}
This implies that $T\xi_0\neq 0$ (otherwise, $\xi_0=0$) and
$$ T\xi_0 \in D(S\das), \quad S\das T\xi_0 =\xi_0 \in \D.$$
Thus $T$ can be applied once more and we get
\begin{equation}\label{three} (TS\das)T\xi_0 =T(S\das T\xi_0) = T\xi_0.  \end{equation}
Hence $T\xi_0$ is an eigenvector of $TS\das$ with eigenvalue $1$.

\berem From \eqref{three} it follows
$$\ip{T\xi_0}{S\ad \eta}= \ip{(TS\das)T\xi_0}{S\ad \eta}=\ip{\xi_0}{\eta}, \quad \forall \eta \in \D.$$
Hence $(TS\das)T\xi_0 \in D(S\das)$ (which is obvious) and
$$ S\das(T S\das)T\xi_0 = \xi_0 \Rightarrow (TS\das)^2 T\xi_0 = T\xi_0. $$
By induction one can prove that $T\xi_0 \in D((TS\das)^n)$, for all $n\in {\mb N}$, and $(TS\das)^n T\xi_0 =T\xi_0$.
\enrem

Let us now assume that $T\xi_0 \in \D$. Then we have
$$ \ip{T(T\xi_0)}{S\ad \eta}- \ip{S(T\xi_0)}{T\ad \eta}= \ip{T\xi_0}{\eta}, \quad \forall \eta \in \D.$$
Whence we obtain
$$ \ip{T^2\xi_0}{S\ad \eta}-\ip{\xi_0}{T\ad \eta} = \ip{T\xi_0}{\eta}, \quad \forall \eta \in \D$$
since $ST\xi_0 = S\das T\xi_0=\xi_0$. Finally
\begin{equation}\label{four}
\ip{T^2\xi_0}{S\ad \eta} = 2\ip{T\xi_0}{\eta}, \quad \forall \eta \in \D.\end{equation}
This implies that $T^2\xi_0 \in D(S\das)$ and $S\das T^2 \xi_0 =2 T\xi_0$.
Therefore $$(TS\das) T^2 \xi_0 = 2 T^2 \xi_0.$$
Also $T^2\xi_0 \neq 0$, since, otherwise, $T\xi_0=0$ by \eqref{four}.
Iterating this procedure we conclude that
\beprop \label{prop_eigenvectors} Let $S,T \in \LDH$ satisfy the commutation relation $[S,T]=\Id_{\D}$ in weak sense. Assume that there exists a nonzero vector $\xi_0\in \D$ such that $S\xi_0=0$ and the vectors $T\xi_0, T^2\xi_0,\ldots T^{n-1}\xi_0$ all belong to $\D$. Then
\begin{itemize}
\item[(i)] $T^n\xi_0$ is an eigenvector of $TS\das$ with eigenvalue $n$;
\item[(ii)] $T^{n-1}\xi_0$ is eigenvector of $S\das T$ with eigenvalue $n$.
\end{itemize}
\enprop
\begin{proof} For proving (i), we proceed by induction. For $n=1$ the statement is true. Assume it is true for $n-1$. Then we have,
\begin{equation}\label{five} \ip{T(T^{n-1}\xi_0)}{S\ad \eta}- \ip{S(T^{n-1}\xi_0)}{T\ad \eta}= \ip{T^{n-1}\xi_0}{\eta}, \quad \forall \eta \in \D.\end{equation}
By the assumption, $T^{n-1}\xi_0 \in D(TS\das)$ and $(TS)T^{n-1}\xi_0 =(TS\das)T^{n-1}\xi_0 = (n-1)T^{n-1}\xi_0$; thus by \eqref{five}, we obtain
$$ \ip{T^n\xi_0}{S\ad \eta}- (n-1)\ip{T^{n-1}\xi_0}{\eta}= \ip{T^{n-1}\xi_0}{\eta}, \quad \forall \eta \in \D.$$
This implies that $T^n\xi_0 \in D(S\das)$ and $S\das T^n\xi_0 = n T^{n-1}\xi_0$. The latter equality shows (ii) and implies that  $(TS\das)T^n\xi_0 = nT^n \xi_0$.
\end{proof}

\berem If $\xi_0\neq 0$, the vectors $\xi_0, T\xi_0, T^2\xi_0,\ldots T^{n}\xi_0$ are linearly independent, being eigenvectors corresponding to different eigenvalues.
\enrem

\beprop Let $S,T \in \LDH$ satisfy the commutation relation $[S,T]=\Id_{\D}$ in weak sense. Let $\xi \in \D$ and assume that $T^k \xi \in \D$ for $k\leq n$, $n \in {\mb N}\cup \{\infty\}$.
Then $S\xi \in D((T\das)^k)$, $k\leq n$ and
$$ ST^k\xi - (T\das)^k S\xi= kT^{k-1}\xi, \quad k\leq n.$$
\enprop

\begin{proof} Again, we use induction on $k$. For $k=1$, the statement follows directly from the weak commutation rule. Assume the statement is true for $k-1$.
Then
\begin{eqnarray*} ST^k \xi &=& ST^{k-1}(T\xi)\\
&=& (k-1)T^{k-2}(T\xi)+ (T\das)^{k-1}ST\xi \\
&=& (k-1)T^{k-1}\xi + (T\das)^{k-1}(T\das S\xi +\xi)\\
&=& kT^{k-1}\xi + (T\das)^k S\xi.
\end{eqnarray*}
\end{proof}
\berem In particular if $S\xi=0$ then $ST^k\xi =kT^{k-1}\xi$.
\enrem

If the assumptions of Proposition \ref{prop_eigenvectors} are satisfied one may have that the largest $n$ for which $T^n\xi_0\in \D$ is finite or infinite.
As we have seen the point spectrum $\sigma_p(TS\das)$ contains all natural numbers up to $n$. Let us denote by $\N_0$ the subspace of $\D$ spanned by $\{\xi_o, T\xi_0, \ldots T^n\xi_0\}$ and by $\N:=\overline{\N_0}$ its closure in $\H$.
Clearly $TS\das$ leaves $\N_0$ invariant. The restriction of $TS\das$ to $\N_0$, denoted by $(TS\das)_0$, behaves in quite regular way. We have indeed

\beprop In the assumptions of Proposition \ref{prop_eigenvectors}, the point spectrum $\sigma_p((TS\das)_0)$ of the operator $(TS\das)_0$, the restriction of $TS\das$ to $\N_0$, consists exactly of the set $\{0,1,\ldots, n\}$, where $n \in {\mb N}\cup \{\infty\}$ is the largest natural number such that the vectors $T\xi_0, T^2\xi_0,\ldots T^{n-1}\xi_0$ all belong to $\D$. Each eigenvalue is simple (in $\N_0$).
\enprop
\begin{proof} Let $\lambda_0$ be an eigenvalue of $(TS\das)_0$ and $0\neq \psi_0=\sum_{k=0}^n \mu_k T^k \xi_0$ a corresponding eigenvector. Then

$$ TS\das \psi_0 -\lambda_0 \psi_0 = \sum_{k=0}^n \mu_k k T^k \xi_0 - \lambda_0 \sum_{k=0}^n \mu_k T^k \xi_0= \sum_{k=0}^n \mu_k (k-\lambda_0)T^k \xi_0=0.$$
The linear independence of the vectors $\{T^k\xi_0, k=0,1,\dots n\}$ then implies that either $\mu_k=0$ which is absurd, or that there exists $k \in \{0,1,\ldots, n\}$ such that $\mu_j=0$ for $j\neq k$ and $\lambda_0=k$.

\end{proof}

\beex \label{ex_weights}
Let us consider the Hilbert space $L^2({\mb R}, wdx)$ where the weight $w$ is a positive continuously differentiable function with the properties
\begin{itemize}
\item $\lim_{|x|\to +\infty} w(x)=0$;
\item $\int_{\mb R}w(x)dx <\infty$.
\end{itemize}
Let
$$D({\sf p})=\left\{ f \in L^2({\mb R}, wdx):\, \exists g \in L^2({\mb R}, wdx), f(x)= \int_{-\infty}^x g(t) dt\right\}.$$
For shortness, we adopt the notation $f'(x)=g(x)$, for $f \in D({\sf p})$.
$$D({\sf q})=\{ f\in L^2({\mb R}, wdx): xf(x) \in L^2({\mb R}, wdx)\}.$$

Put $\D= D({\sf q})\cap D({\sf p})$. Then both the operators $S, T$ defined by
$$ (Sf)(x) =f'(x), \qquad  (Tf)(x)=xf(x), \quad f \in \D$$
map $\D$ into $L^2({\mb R}, wdx)$.

The operator $T$ is symmetric in $\D$. As for $S$, we have (formally) for $f\in \D$ and $ g \in L^2({\mb R}, wdx)$,
\begin{eqnarray*}
\ip{Sf}{g}&=& \int_{\mb R}f'(x)\overline{g(x)} w(x) dx =-\int_{\mb R}f(x)\frac{d}{dx}(\overline{g(x)}w(x))dx \\
&=& -\int_{\mb R} f(x)\overline{g'(x)}w(x)dx - \int_{\mb R} f(x) \overline{g(x)} w'(x) dx \\
&=& -\int_{\mb R} f(x)\overline{g'(x)}w(x)dx - \int_{\mb R} f(x) \overline{g(x)} \frac{w'(x)}{w(x)}\, w(x) dx.
\end{eqnarray*}
Thus, $g \in D(S^*) $ if, and only if $g \in D({\sf p})$ and $g\frac{w'}{w} \in L^2({\mb R}, wdx)$. In this case
$$ (S^* g)(x)= -g'(x) - g(x)\frac{w'(x)}{w(x)}.$$
Hence $S \in \LDH$, with $\H=L^2({\mb R}, wdx)$ if, and only if, $$\int_{\mb R} |g(x)|^2 \frac{|w'(x)|^2}{w(x)} dx<\infty.$$
This is certainly satisfied if, for instance, $w'/w$ is a bounded function on ${\mb R}$.
Now, for $f,g\in \D$, we have
\begin{eqnarray*}
\ip{Tf}{S\ad g} -\ip{Sf}{T\ad g}&=-& \int_{\mb R}x f\,\left(\overline{g'}+\overline{g}\frac{w'}{w}\right)wdx - \int_{\mb R} f'\, x\overline{g} wdx  \\
&=& -\int_{\mb R} x(f' \overline{g}w +f\overline{g'}w +f \overline{g}w')dx \\
&=& -\int_{\mb R} x (f\overline{g}w)' dx = \int_{\mb R} f\overline{g}w dx = \ip{f}{g}.
\end{eqnarray*}
Hence, the commutation relation $[S,T]=\Id_{\D}$ is satisfied in weak sense.

The function $u_0(x)=1$, for every $x \in {\mb R}$, is clearly in the kernel of $S$ for every function $w$ satisfying the assumptions made so far.

Now we make some particular choice of $w$.

\medskip
Let us consider $w(x)=w_\alpha(x)= (1+x^4)^{-\alpha}$, $\alpha>\frac{3}{4}$. It is easily seen that $w_\alpha(x)$ satisfies all the conditions we have required (for instance $w'_\alpha/w_\alpha$ is bounded). The function $u_0(x)=1$, which belongs to  $L^2({\mb R}, w_\alpha dx)$ for any $\alpha>\frac{3}{4}$, satisfies $Su=0$ and the largest $n$ for which $T^n u_0$ belongs to $\D$ satisfies $n<2\alpha-\frac{3}{2}$. Hence, the dimension of the corresponding subspace $\N_0$ is $\left[2\alpha-\frac{3}{2}\right]+1$.

\medskip
Let us now take $w(x)= e^{-x^2/2}$ and $\D$ the subspace consisting of all polynomials in $x$. In this case the functions $u_k(x)= x^k$, $k=1,2,\ldots,$ belong to $\D$ and they satisfy $TS\das u_k= ku_k$ for every $k \in {\mb N}$. The subspace $\N_0$ coincides in this case with $\D$. One can readily check that every complex number $\lambda$ with $\Re \lambda >-\frac{1}{2}$ is an eigenvalue of $TS\das$; but the corresponding eigenvector is in $\D$ if and only if  $\Re\lambda$ is  a natural number. \enex

\medskip As we have seen with the previous examples the subspace $\N_0$ spanned by $\{T^k\xi_0, k\in {\mb N}\}$ can be finite dimensional. Thus $N:= (TS\das)_0$ is a bounded symmetric operator on $\N_0=\N \cong {\mb C}^n$, having the numbers $0,1,\ldots, n$ as eigenvalues. Hence $N$ is positive and thus there exists an operator $C\in \B(\N)$ such that $N=C\ad C $. None of the possible solutions of this operator equation can, however, satisfy the commutation relation $[C,C\ad]=\Id$, due to the Wiener -Wielandt - von Neumann theorem. If $\N_0$ is infinite dimensional then $N$ may fail to be symmetric, as the last case in Example \ref{ex_weights} shows.

\subsection{Existence of intertwining operators}\label{subsec_intertwining}
As noticed in Remark \ref{rem_comm_adj}, if $[S,T]=\Id_{\D}$ in weak sense, then $[T^\dagger,S^\dagger]=\Id_\D$ in weak sense, too. Hence it is worth considering the situation where the assumptions of Proposition \ref{prop_eigenvectors} hold not only for the pair $S, T$ but also for the pair $T\ad, S\ad$. This means that we assume that there exists also
a nonzero vector $\eta_0\in \D$ such that $T\ad\eta_0=0$ and the vectors $S^\dagger\eta_0$, $(S^\dagger)^2\eta_0,\ldots$ $(S^\dagger)^{m-1}\eta_0$ all belong to $\D$, then Proposition \ref{prop_eigenvectors} gives that
\begin{itemize}
\item[(i)] $(S^\dagger)^m\eta_0$ is an eigenvector of $S\ad T^*$ with eigenvalue $m$;
\item[(ii)] $(S^\dagger)^{m-1}\eta_0$ is eigenvector of $T^*S\ad$ with eigenvalue $m$.
\end{itemize}

Let $m\in {\mb N}\cup \{\infty\}$ be the largest natural number such that the vectors  $S^\dagger\eta_0$, $(S^\dagger)^2\eta_0,\ldots,$ $(S\ad)^{m-1}\eta_0$ all belong to $\D$ and $\M_0$ the linear span of these vectors. Then the point spectrum $\sigma_p((S\ad T^*)_0)$, of the operator $(S\ad T^*)_0:= S\ad T^*\upharpoonright\M_0$, consists, as before,  of the numbers $\{0,1,\ldots, m\}$.

One may wonder if any relation between the two numbers $n$ and $m$ can be established. The answer is negative, in general. Indeed, the operators $S,T$ considered in the second case of Example \ref{ex_weights} provide an instance where $n=\infty$ and $m=0$. Thus it is apparently impossible to find a relationship between $\N_0$ and $\M_0$ without additional assumptions.

Let us now call $\xi_k:=\frac{1}{\sqrt{k!}}T^k\xi_0$, $k=1, \ldots, n$ and $\eta_r:=\frac{1}{\sqrt{r!}}(S^\dagger)^r\eta_0$, $r=1, \ldots,m$. It is always possible to choose the normalization of $\xi_0$ and $\eta_0$ in such a way that $\ip{\xi_0}{\eta_0}=1$.We put $\F_\xi:=\{\xi_k; k=1, \ldots, n \}$ $\F_\eta:=\{\eta_r; r=1, \ldots, m \}$. Then we have

\begin{lemma} The sets $\F_\xi$ and $\F_\eta$ are biorthogonal:
$$
\ip{\xi_i}{\eta_j}=\delta_{i,j},
$$
for all $i=1, \ldots, n$ and $j=1, \ldots, m$.

\end{lemma}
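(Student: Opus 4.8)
The plan is to compute the inner product $\ip{\xi_i}{\eta_j}$ by repeatedly applying the weak commutation relation to move operators from one side of the pairing to the other, reducing the indices $i$ and $j$ until one of them hits zero. Concretely, recall that $\xi_i = \frac{1}{\sqrt{i!}}T^i\xi_0$ and $\eta_j = \frac{1}{\sqrt{j!}}(S\ad)^j\eta_0$, so that (up to the normalization constant $\frac{1}{\sqrt{i!\,j!}}$) we must evaluate $\ip{T^i\xi_0}{(S\ad)^j\eta_0}$. The key algebraic fact, already established in the earlier propositions, is that $S\das T^k\xi_0 = kT^{k-1}\xi_0$ for $k\le n$; dually, since $[T\ad,S\ad]=\Id_\D$ in weak sense, we get $T\dass (S\ad)^k\eta_0 = k(S\ad)^{k-1}\eta_0$ for $k\le m$. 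These two recursions are the engine of the proof.

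First I would make precise the "moving across the pairing" step: from the weak commutation relation one has $\ip{T\zeta}{S\ad\eta} = \ip{S\zeta}{T\ad\eta} + \ip{\zeta}{\eta}$ for $\zeta,\eta\in\D$, and more to the point, whenever the relevant vectors lie in the right domains, $\ip{T^i\xi_0}{(S\ad)^j\eta_0} = \ip{(S\ad)^{j-1}\eta_0}{\,\cdot\,}$-type identities follow from $S\das T^i\xi_0 = iT^{i-1}\xi_0$. Indeed, writing $(S\ad)^j\eta_0 = S\ad\big((S\ad)^{j-1}\eta_0\big)$ and using that $T^i\xi_0 \in D(S\das)$ with $S\das T^i\xi_0 = iT^{i-1}\xi_0$, we get
\begin{equation*}
\ip{T^i\xi_0}{(S\ad)^j\eta_0} = \ip{S\das T^i\xi_0}{(S\ad)^{j-1}\eta_0} = i\,\ip{T^{i-1}\xi_0}{(S\ad)^{j-1}\eta_0}.
\end{equation*}
Iterating this $\min(i,j)$ times lowers both indices in lockstep. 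If $i=j$ we reach $i!\,\ip{\xi_0}{\eta_0} = i!$; if $i<j$ we reach $i!\,\ip{\xi_0}{(S\ad)^{j-i}\eta_0}$, and if $i>j$ we reach, symmetrically, $j!\,\ip{T^{i-j}\xi_0}{\eta_0}$. So the whole statement reduces to showing that $\ip{\xi_0}{(S\ad)^\ell\eta_0}=0$ and $\ip{T^\ell\xi_0}{\eta_0}=0$ for $\ell\ge 1$.

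For the leftover off-diagonal cases I would use the defining kernel conditions $S\xi_0=0$ and $T\ad\eta_0=0$. For $\ip{T^\ell\xi_0}{\eta_0}$ with $\ell\ge 1$: push one $T$ onto $\eta_0$ via the weak relation — but $T$ pairs with $T\ad$, so instead peel off using $\ip{T^\ell\xi_0}{\eta_0}=\ip{T\,(T^{\ell-1}\xi_0)}{\eta_0}$ and note $\eta_0\in\ker T\ad$ means $\eta_0\in D(T\das)$ with $T\dass\eta_0=0$... actually the clean route is: $\ip{T^\ell\xi_0}{\eta_0} = \frac{1}{?}$ — better, use that $(S\ad)^0\eta_0=\eta_0$ is annihilated by $T\dass$, hence applying the dual recursion $T\dass(S\ad)^k\eta_0 = k(S\ad)^{k-1}\eta_0$ at $k=0$ is vacuous; instead observe directly that $\ip{T^\ell\xi_0}{\eta_0}$, by the index-lowering identity above run in the other order (peeling $T$'s by pairing against $T\ad\eta_0=0$ after one commutation step), telescopes to a multiple of $\ip{\xi_0}{\eta_0}$ times zero. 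Symmetrically $\ip{\xi_0}{(S\ad)^\ell\eta_0}=0$ follows from $S\xi_0=0$. The main obstacle, and the point requiring genuine care rather than formula-pushing, is domain bookkeeping: at each step one must check that the vector being moved across actually lies in the domain of the adjoint operator being applied, so that $\ip{X\zeta}{\eta}=\ip{\zeta}{X\ad\eta}$ is legitimate — this is where the hypotheses "$T\xi_0,\dots,T^{n-1}\xi_0\in\D$" and "$S\ad\eta_0,\dots,(S\ad)^{m-1}\eta_0\in\D$" and the conclusions of Proposition~\ref{prop_eigenvectors} (giving $T^k\xi_0\in D(S\das)$, etc.) are exactly what is needed, and one must be careful that the indices stay within the ranges $i\le n$, $j\le m$ where these memberships are guaranteed. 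Once the domain conditions are tracked honestly, the computation collapses to $\ip{\xi_i}{\eta_j} = \frac{1}{\sqrt{i!\,j!}}\cdot i!\cdot\delta_{i,j}\cdot\ip{\xi_0}{\eta_0} = \delta_{i,j}$, using the normalization $\ip{\xi_0}{\eta_0}=1$.
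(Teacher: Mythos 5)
Your argument is correct in substance, but it follows a different route from the paper's. The paper disposes of the off-diagonal case abstractly: $\xi_i$ is an eigenvector of $TS\das$ with eigenvalue $i$, $\eta_j$ is an eigenvector of $S\ad T^*$ with eigenvalue $j$, and the inclusion $(TS\das)^* \supseteq \overline{S\ad}\,T^* \supseteq S\ad T^*$ turns $\eta_j$ into an eigenvector of $(TS\das)^*$, so that $i\ip{\xi_i}{\eta_j}=\ip{TS\das\xi_i}{\eta_j}=\ip{\xi_i}{(TS\das)^*\eta_j}=j\ip{\xi_i}{\eta_j}$ forces $\ip{\xi_i}{\eta_j}=0$ for $i\neq j$. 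Your telescoping computation via $S\das T^k\xi_0=kT^{k-1}\xi_0$ reaches the same conclusion more laboriously, but it buys something the paper's one-line argument does not: it explicitly produces the diagonal value $\ip{T^i\xi_0}{(S\ad)^i\eta_0}=i!\ip{\xi_0}{\eta_0}$, hence $\ip{\xi_i}{\eta_i}=1$ after normalization, which the abstract eigenvalue argument is silent about. Two small blemishes: in the case $i>j$ the accumulated constant is $i(i-1)\cdots(i-j+1)=i!/(i-j)!$, not $j!$ (harmless, since it multiplies a vanishing inner product); and your handling of $\ip{T^\ell\xi_0}{\eta_0}$ is needlessly tangled --- a single step suffices, namely $\ip{T^\ell\xi_0}{\eta_0}=\ip{T^{\ell-1}\xi_0}{T\ad\eta_0}=0$, since $T^{\ell-1}\xi_0\in\D=D(T)$ and $\eta_0\in\D\subseteq D(T^*)$ with $T\ad\eta_0=0$; the symmetric identity $\ip{\xi_0}{(S\ad)^\ell\eta_0}=\ip{S\das\xi_0}{(S\ad)^{\ell-1}\eta_0}=0$ uses $S\das\xi_0=S\xi_0=0$. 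Your emphasis on the domain bookkeeping (that $T^k\xi_0\in D(S\das)$ with $S\das T^k\xi_0=kT^{k-1}\xi_0$, supplied by Proposition 3.2) is exactly the right point of care, and it is the same input the paper's proof relies on.
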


The proof of this Lemma easily follows from the fact that $\xi_i$ and $\eta_j$ are eigenvectors of the two (non self-adjoint) operators, $TS\das$ and $S\ad T^*$, having the property $(TS\das)^* \supseteq \overline{S\ad}T^* \supseteq S\ad T^* $.

Assume now that $n=m=\infty$. Thus the subspaces $\N_0$ and $\M_0$ are both infinite dimensional. Then we can define two operators which obey interesting intertwining relations. More in detail, let us define $K_\xi$ via its action on the basis $\F_\eta$: $K_\xi(\eta_j)=\xi_j$, $j\in {\mb N}$. We can also introduce a second operator $K_\eta$ via its action on the second basis constructed above, $\F_\xi$: $K_\eta(\xi_j)=\eta_j$, $j\in {\mb N}$. Both the operators $K_\xi$ and $K_\eta$ are then extended by linearity to $\M_0$ and $\N_0$, respectively. It is clear that one is the inverse of the other: $K_\eta=K_\xi^{-1}$, but in general neither $K_\xi$ nor $K_\eta$ are bounded. A direct computation shows that they obey the following intertwining relations:
$$
K_\eta\left(TS\das\right)\phi =\left(S^\dagger T^*\right)K_\eta\phi,\; \forall \phi \in \M_0;
$$
$$
K_\xi\left(S^\dagger T^*\right)\psi=\left(TS\das\right)K_\xi\psi,\; \forall \psi \in \N_0.
$$
In particular, if $\N_0=\M_0=\H$ and $K_\xi$ and $K_\eta$ are bounded, it is possible to show that $\F_\xi$ and $\F_\eta$ are Riesz bases of $\H$ and an orthonormal basis $\E=\{e_j\}$ can be defined by, for instance, $e_j=K_\eta^{1/2}\xi_j$ (see \cite{bagpb1}).

\berem The analysis made so far started with the assumption that $[S,T]=\Id_D$ in {\em weak } sense.
It is then natural to pose the question which improvements are obtained if we consider, instead, that (CR.3) or (CR.4) is fulfilled. More precisely, assume that (CR.3) is satisfied and that there exists a nonzero $\xi_0\in \D$ such that, $V_S(\alpha)\xi_0 = \xi_0$, for every $\alpha \geq 0$ (i.e., an {\em invariant} vector for $V_S(\alpha)$). This implies, as it is easily seen that $S\xi_0=0$ and so everything goes through in the very same way as before.
\enrem

It is worth mentioning here that the properties discussed in Sections \ref{subsec_eigenvectors} and \ref{subsec_intertwining} depend in an essential way on the nonzero vectors $\xi_0$ and $\eta_0$ satisfying, respectively, $S\xi_0=0$ and $T\ad\eta_0=0$: choosing different elements in the kernels of $S$ and $T\ad$ may give drastically different results.

\section{The partial O*-algebra generated by $S$ and $T$}\label{sec_4}

In this Section we will describe the partial O*-algebra generated by $S,T\in \LDH$ such that $[S,T]=\Id_{\D}$ in weak sense, starting from observing that in the case where $S,T \in \LD$ and $ST-TS=\Id_D$ the O*-algebra generated by them is very {\em regular} in the sense that it consists only in polynomials in the variables $S^kT^h$, $(S\ad)^r(T\ad)^s$, with $k,h,r,s\in {\mb N}$. As we shall see below, the situation becomes more involved if $S$ and $T$ do not leave the domain $\D$ invariant. For an analogous analysis for commuting operators we refer the reader to \cite[Sect. 3.3]{ait_book}.

\medskip
Let $ S,T\in \LDH$ satisfy the commutation relation $[S,T]=\Id_{\D}$ in weak sense. Assume that $S^*=\overline{S\ad}$, $T^*= \overline{T\ad}$. This clearly implies that $(\overline{S\ad})^k = (S^*)^k$, $(\overline{T\ad})^k = (T^*)^k$. We want to describe the partial O*-algebra $\N[S,T]$ generated by $S,T$ and study its structure. We begin with defining the {\em power length} of $S,S\ad, T$ and $T\ad$.

Let $m_0$ be the largest of the numbers $k \in {\mb N}\cup \{\infty\}$ satisfying $ \D \subset D(\overline{S}^k) \cap D(\overline{S\ad}^k)$.
Suppose that
\begin{equation}\label{eq_one}
\D \subset D(\overline{T}S)\cap D(T^*S^*).
\end{equation}

Then, if $\eta \in \D$,
$$ \ip{S\ad\xi}{T\eta} = \ip{T\ad \xi}{S\eta}+\ip{\xi}{\eta} = \ip{\xi}{\overline{T}S\eta} + \ip{\xi}{\eta}, \quad \forall \xi \in \D.$$

Hence,
\begin{equation}\label{eq_two}\eta \in D(S\das T)=D(\overline{S}T)\; \mbox{ and } \;\overline{S}T\eta = \overline{T}S\eta + \eta. \end{equation}

Similarly, from the equality

$$ \ip{S\xi}{T\ad \eta} = \ip{T \xi}{S\ad\eta}-\ip{\xi}{\eta} = \ip{\xi}{T^*S\ad\eta} - \ip{\xi}{\eta}, \quad \forall \xi \in \D,$$
we obtain

\begin{equation}\label{eq_three}\eta \in D(S^* T)\; \mbox{ and } \;S^*T\ad\eta = T^*S\ad\eta - \eta. \end{equation}

Suppose \eqref{eq_one} holds and that
\begin{equation}\label{eq_four}
\D \subset D(\overline{T}\,\overline{S}^2)\cap D(T^*{S^*}^2).
\end{equation}

Then, if $\eta \in \D$, for every $\xi \in \D$, we get
\begin{align*} \ip{S\ad\xi}{\overline{S}T\eta} &=\ip{S\ad\xi}{\overline{T}S\eta+ \eta} & \mbox{by \eqref{eq_two}}\\
& =  \ip{S\ad\xi}{\overline{T}S\eta} + \ip{S\ad\xi}{\eta} & {}\\
&=\ip{T^*S\ad\xi}{S\eta} + \ip{\xi}{S\eta} & \mbox{by \eqref{eq_one}}\\
&= \ip{S^*T\ad\xi +\xi}{S\eta} + \ip{\xi}{S\eta} & \mbox{by \eqref{eq_two}}\\
&=\ip{\xi }{\overline{T}\,\overline{S}^2\eta} + \ip{\xi}{2S\eta} & \mbox{by \eqref{eq_four}.}
\end{align*}

Hence,
\begin{equation}\label{eq_five}\eta \in D(\overline{S}^2T)\; \mbox{ and } \;\overline{S}^2T\eta = \overline{T}\overline{S}^2\eta + 2S\eta. \end{equation}

Similarly, one can prove

\begin{equation}\label{eq_six}\eta \in D({S^*}^2 T\ad)\; \mbox{ and } \;{S^*}^2T\ad \eta = T^*{S^*}^2\eta -2 S\ad \eta. \end{equation}
Repeating the above argument, we get that, if

\begin{equation}\label{eq_seven}
\D \subset \bigcap_{k=1}^n D(\overline{T}\,\overline{S}^k)\cap D(T^*{S^*}^k),
\end{equation}
then

\begin{equation}\label{eq_eight}
\D \subset \bigcap_{k=1}^n D(\overline{S}^kT)\cap D({S^*}^kT\ad)
\end{equation}
and

\begin{equation}\label{eq_nine}
\overline{S}^kT \xi = \overline{T}\,\overline{S}^k \xi+ k \overline{S}^{k-1}\xi, \quad \forall \xi \in \D.
\end{equation}

Thus we define $m_1$ as the largest of the numbers  $n \in {\mb N}\cup \{\infty\}$ such that
$$ \D \subset \bigcap_{k=1}^n D(\overline{T}\,\overline{S}^k)\cap D(T^*{S^*}^k).$$
Then, $m_0\geq m_1$,
$$ \D \subset \bigcap_{k=1}^n D(\overline{S}^kT)\cap D({S^*}^kT\ad)$$

and

\begin{align}
& \overline{S}^kT \xi = \overline{T}\,\overline{S}^k \xi+ k \overline{S}^{k-1}\xi, \\
& {S^* }^kT\ad \xi = {T}^*\,{S^*}^k \xi- k {S^*}^{k-1}\xi, \quad \forall \xi \in \D \mbox{ and } 0\leq k \leq m_1. \nonumber
\end{align}

Now, suppose that
$$ D\subset D(\ov{T}^2)\cap D({T^*}^2) \cap D(\ov{T}^2 S) \cap D({T^*}^2 S\ad).$$
Changing $S$ and $T$ in \eqref{eq_five} and \eqref{eq_six}, we get
$$D \subset D(\ov{S}\,\ov{T}^2) \cap D(S^*{T^*}^2)$$
and
\begin{align}\label{eq_ten}
& \ov{S}\,\ov{T}^2 \xi = \ov{T}^2 S \xi +2 T\xi \\
& S^*{T^*}^2 \xi = {T^*}^2 S\ad \xi - 2T\ad \xi, \quad \forall \xi \in \D. \nonumber
\end{align}

Now suppose that
$$\D \subset \bigcap_{k=0}^2 D(\overline{T}^2\,\overline{S}^k)\cap D({T^*}^2{S^*}^k).$$
Then, if $\eta \in \D$, for every $\xi \in \D$, we get
\begin{align*} \ip{S\ad\xi}{\overline{S}\,\ov{T}^2\eta} &=\ip{S\ad\xi}{\overline{T}^2S\eta+ 2T\eta} & \mbox{by \eqref{eq_ten}}\\
&=\ip{{T^*}^2S\ad\xi}{S\eta} + \ip{\xi}{2\ov{S}T\eta} & {}\\
&= \ip{S^*{T^*}^2\xi +2T\ad\xi}{S\eta} + \ip{\xi}{2\ov{S}T\eta} & \mbox{by \eqref{eq_ten}}\\
&=\ip{\xi }{\overline{T}^2\,\overline{S}^2\eta} + \ip{\xi}{2\ov{T}S\eta}+\ip{\xi}{2\ov{S}T\eta} & {} \\
&=\ip{\xi}{\ov{T}^2\,\ov{S}^2\eta +4 \ov{T}S\eta +2\eta} &  \mbox{by \eqref{eq_two}.}
\end{align*}
Hence, $$\eta \in D(\ov{S}^2 \,\ov{T}^2) \mbox{ and } \ov{S}^2 \,\ov{T}^2\eta = \ov{T}^2\,\ov{S}^2 \eta +4 \ov{T}S\eta +2\eta.$$
Similarly, $$\eta \in D({S^*}^2 \,{T^*}^2)\mbox{ and } {S^*}^2 \,{T^*}^2\eta = {T^*}^2\,{S^*}^2 \eta -4 {T^*} {S^*}\eta -2\eta.$$

\medskip Repeating the above argument, we get that, if
$$\D \subset \bigcap_{k=0}^n D(\overline{T}^2\,\overline{S}^k)\cap D({T^*}^2{S^*}^k),$$ then

\begin{equation}\label{eq_dom0}\D \subset \bigcap_{k=0}^n D(\overline{S}^k\,\overline{T}^2)\cap D({S^*}^k{T^*}^2)\end{equation} and, for every $\eta \in \D$,

\begin{align} &\label{eq_dom1} \ov{S}^k \,\ov{T}^2\eta = \ov{T}^2\,\ov{S}^k \eta +\sum_{k=0}^2\sum_{\ell=0}^{n-1}\alpha_{k\ell} \ov{T}^k\ov{S}^\ell  \eta\\
 &{} \nonumber\\
& \label{eq_dom2}{S^*}^k \,{T^*}^2\eta = {T^*}^2\,{S^*}^k \eta -\sum_{k=0}^2\sum_{\ell=0}^{n-1}\alpha_{k\ell} {T^*}^k{S^*}^\ell  \eta ,\end{align}
where the $\alpha_{k\ell}$'s are integer numbers.

Thus we may define $m_2$ as the largest $n \in {\mb N}\cup\{0\}\cup\{\infty\}$ satisfying
$$\D \subset \bigcap_{k=0}^n D(\overline{T}^2\,\overline{S}^k)\cap D({T^*}^2{S^*}^k).$$
Clearly, $m_0\geq m_1\geq m_2$ and \eqref{eq_dom0}, \eqref{eq_dom1} and \eqref{eq_dom2} hold for every $k=0,\ldots, m_2$.
In the very same way, if all powers up to $n_0$ of $\ov{T}$ are defined, repeating the above procedure with the appropriate assumptions on the domains, we can define numbers
$$ m_0 \geq m_1 \geq m_2 \geq \cdots \geq m_{n_0}$$
and for every $r=1,\ldots, n_0$ we will get, for every $\eta \in \D$,
\begin{align} &\label{eq_dom3} \ov{S}^k \,\ov{T}^r\eta = \ov{T}^r\,\ov{S}^k \eta +\sum_{k=0}^r\sum_{\ell=0}^{m_r-1}\alpha_{k\ell} \ov{T}^k\ov{S}^\ell  \eta\\
 &{} \nonumber\\
& \label{eq_dom4}{S^*}^k \,{T^*}^r\eta = {T^*}^r\,{S^*}^k \eta -\sum_{k=0}^r\sum_{\ell=0}^{m_r-1}\alpha_{k\ell} {T^*}^k{S^*}^\ell  \eta .\end{align}

\medskip
>From the assumptions $S^*=\ov{S\ad}$ , $T^*=\ov{T\ad}$, it follows that ${S^*}^k= \ov{S\ad}^k$, ${T^*}^k = \ov{T\ad}^k$. Hence we put $X\pow{k}:= \ov{X}^k \upharpoonright \D$, with $X\in \{S, S\ad, T, T\ad\}$.

By \eqref{eq_dom3} and \eqref{eq_dom4}, we can define the {\em regular part} ${\mc R}[S,T]$ of the partial O*-algebra ${\mc M}[S,T]$ generated by $S,T$ as
\begin{align*}
{\mc R}[S,T] &= \left\{ p\pow{0}(S,T):= \sum_{k=0}^{m_0}(\alpha_{k0}S\pow{k} + \beta_{k0}{S\ad}\pow{k})\right. \\
&+  \left. \sum_{k=0}^{m_1}(\alpha_{k1}S\pow{k}T\pow{1} + \beta_{k1}{S\ad}\pow{k}{T\ad}\pow{1})\right.\\
& + \cdots \cdots \\
&+ \left. \sum_{k=0}^{m_{n_0}}(\alpha_{kn_0}S\pow{k}T\pow{n_0} + \beta_{kn_0}{S\ad}\pow{k}{T\ad}\pow{n_0});
\;\alpha_{kj}, \, \beta_{kj} \in {\mb C}
\right\}
\end{align*}

The {\em singular part} ${\mc S}[S,T]$ of ${\mc M}[S,T]$ is defined as  $${\mc S}[S,T]:= {\mc M}[S,T] \setminus {\mc R}[S,T].$$
In order to describe the singular part ${\mc S}[S,T]$ we define the following sets.

\begin{align*}
\Sigma_1[S,T]&= \mbox{ linear span }\{ p\pow{1}[S,T]= p_1\pow{0}[S,T] \mult p_2\pow{0}[S,T];\\ &\, p_1\pow{0}[S,T], p_2\pow{0}[S,T]\in {\mc R}[S,T], p_1\pow{0}[S,T]\in L^w (p_2\pow{0}[S,T])\}\\
\Sigma_2[S,T]&= \mbox{ linear span }\{ p\pow{2}[S,T]= p_1\pow{1}[S,T] \mult p_2\pow{1}[S,T];\\ &\, p_1\pow{0}[S,T], p_2\pow{1}[S,T]\in \Sigma_1[S,T], p_1\pow{1}[S,T]\in L^w (p_2\pow{1}[S,T])\} \\
& \ldots \dots
\end{align*}

Then, we have

$$ {\mc R}[S,T] \subset \Sigma_1[S,T] \subset \Sigma_2[S,T] \subset \cdots .$$
Now put,
\begin{align*}
& {\mc S}_1[S,T]= \Sigma_1[S,T]\setminus {\mc R}[S,T] \\
& {\mc S}_2[S,T]= \Sigma_2[S,T]\setminus \Sigma_1[S,T] \\
& \cdots \cdots \\
& {\mc S}_{n+1}[S,T]= \Sigma_{n+1}[S,T]\setminus \Sigma_n[S,T].
\end{align*}
Then, one has
$${\mc S}[S,T]= \bigcup_{k=1}^\infty {\mc S}_k[S,T].$$
Thus, the following statement holds.
\betheo Let $ S,T\in \LDH$ satisfy the commutation relation $[S,T]=\Id_{\D}$ in weak sense and $S^*=\ov{S\ad}$ $T^*=\ov{T\ad}$. Then, the partial O*-algebra ${\mc M}[S,T]$ generated by $S$ and $T$ decomposes into a regular part ${\mc R}[S,T]$ and a singular part ${\mc S}[S,T]$, as described above.
\entheo

\section{Uncertainty relations and other applicative aspects}\label{sec_5}
If $A \in \LDH$ and $z \in {\mb C}$, we define, for $\xi \in \D$ with $\|\xi\|=1$,
$$ (\Delta A)_\xi(z)= \|(A-z\Id_D)\xi\|.$$

We notice that for $z=\ip{A\xi}{\xi}$, we obtain $$(\Delta A)_\xi:= (\Delta A)_\xi(\ip{A\xi}{\xi})=\left(\ip{A\xi}{A\xi}-|\ip{A\xi}{\xi}|^2\right)^{1/2}$$
which reduces to the well known {\em uncertainty} of $A$ when $A$ is selfadjoint and $\xi \in D(A^2)$.

Let now $S,T\in \LDH$ satisfy (CR.2) Then, as it is easily seen, $S-z\Id$ and $T-w\Id$ also satisfy (CR.2) and a simple application of the Cauchy-Schwarz inequality gives, for every $\xi\in \D$, with $\|\xi\|=1$,
$$ \ip{\xi}{\xi} \leq 2 \max\{\|(S-z\Id_\D)\xi\|, \|(S\ad-\overline{z}\Id_\D)\xi\|\} \max\{\|(T-w\Id_\D)\xi\|, \|(T\ad-\overline{w}\Id_\D)\xi\|\};$$
or,
\begin{equation}\label{eq_uncertainty}\ip{\xi}{\xi}\leq 2\max\{(\Delta S)_\xi(z), (\Delta S\ad)_\xi(\overline{ z})\}\max\{(\Delta T)_\xi(w), (\Delta T\ad)_\xi(\overline{w}) \}.\end{equation}

The latter inequality reads as an {\em uncertainty principle} for non necessarily selfadjoint operators satisfying (CR.2).

In the case of symmetric operators, if one of them is bounded from below it is known that there exists no vector $\xi$ for which the previous inequality becomes an equality (Arai \cite{arai}).
We discuss here a similar result concerning non self-adjoint (or even symmetric) operators, whose proof will be based, rather than on a general theorem, on explicit counterexamples.

The first step consists in finding a reasonable counterpart of the Heisenberg uncertainty relation for non commuting operators which are not necessarily self-adjoint. One possibility is given, as we have seen, by the inequality \eqref{eq_uncertainty}.
But this is not the only possible choice of uncertainty relation for $S$ and $T$. As a matter of fact, it is natural to consider the following second possibility: let $\alpha_S$ and $\alpha_T$ be two arbitrary (but fixed) real quantities and let $D=\alpha_S(S+S^\dagger)$ and $E=i\alpha_T(T-T^\dagger)$. Hence $D=D^\dagger$ and $E=E^\dagger$. Let us now suppose that (in the usual weak sense)
$$
[S^\dagger,T]-[S,T^\dagger]=0.
$$
Hence we have $[D,E]=iC$, where $C=2\alpha_S\alpha_T\Id=C^\dagger$, so that the standard Heisenberg uncertainty relation allows us to conclude that $\Delta D_\varphi\Delta E_\varphi\geq\frac{|<C>|}{2}=|\alpha_S\alpha_T|$. But, using the estimates $(\Delta D)_\varphi\leq|\alpha_S|\left((\Delta S)_\varphi+(\Delta S)_\varphi^\dagger\right)$ and $(\Delta E)_\varphi\leq|\alpha_T|\left((\Delta T)_\varphi+(\Delta T)_\varphi^\dagger\right)$, we conclude that, for all $\varphi\in\D$,
\be
\left((\Delta S)_\varphi+(\Delta S)_\varphi^\dagger\right)\left((\Delta T)_\varphi+(\Delta T)_\varphi^\dagger\right)\geq 1.
\label{32}\en
>From now on we will call UR$_1$ condition \eqref{eq_uncertainty} and UR$_2$ condition (\ref{32}) ({\em first and second uncertainty relation}). We will now show that these are really unequivalent conditions.

\vspace{2mm}

\berem It can be useful to extend the above inequalities to the case in which $[S,T]=C$, with $C\neq\Id$, in general. These extensions look like
$$
\max\{(\Delta S)_\varphi,(\Delta S)_\varphi^\dagger\}   \max\{(\Delta T)_\varphi,(\Delta T)_\varphi^\dagger\}   \geq\frac{\left|\left<\varphi,C\varphi\right>\right|}{2},
$$ and
$$
\left((\Delta S)_\varphi+(\Delta S)_\varphi^\dagger\right)\left((\Delta T)_\varphi+(\Delta T)_\varphi^\dagger\right)\geq \left|\Re\{\left<\varphi,C\varphi\right>\}\right|.
$$
\enrem

Let us start with showing that a vector which saturates UR$_1$ does not necessarily saturate UR$_2$. Let $a$ and $a^\dagger$ be such that $[a,a^\dagger]=\Id$. We consider the operators $S=\frac{1}{\sqrt{2}}\left(a+ia^\dagger\right)$ and $T=\frac{1}{\sqrt{2}}\left(a^\dagger+ia\right)$. It is clear that $[S,T]=\Id$, $S\neq T^\dagger$, and that $[S^\dagger,T]-[S,T^\dagger]=0$. Let $\Phi(z)$ be the coherent state of $a$, \cite{gazbook}: $a\Phi(z)=z\Phi(z)$, for all $z\in\Bbb{C}$. An easy computation shows that $(\Delta S)_\varphi=(\Delta S)_\varphi^\dagger=(\Delta T)_\varphi=(\Delta T)_\varphi^\dagger=\frac{1}{\sqrt{2}}$. Hence UR$_1$ is saturated by $\Phi(z)$ while UR$_2$ is satisfied (this is obvious) but not saturated.

On the other hand, let us now show an example of a vector which saturates UR$_2$ but not UR$_1$. For that, it is enough to take $S=a$ and $T=a^\dagger$. Once again, we compute the uncertainty of these operators on $\Phi(z)$, getting
$(\Delta S)_\varphi=(\Delta T)_\varphi^\dagger=0$ and $(\Delta S)_\varphi^\dagger=(\Delta T)_\varphi=1$. Therefore UR$_2$ is saturated while UR$_1$ is not.

These examples show that UR$_1$ and UR$_2$ are really different conditions and that both could be considered as uncertainty relations. However we should also say that UR$_2$ is relevant only if $T\neq T^\dagger$ and $S\neq-S^\dagger$ because, if this is not the case, then $E=0$ or $D=0$ so that (\ref{32}) is trivialized, while UR$_1$ holds also for self-adjoint $S$ and $T$. We also notice that the two inequalities are different also for finite-dimensional Hilbert spaces. Let us consider, indeed, $\h=\Bbb{C}^2$. If we take
$$
S=\left(
    \begin{array}{cc}
      0 & s \\
      0 & 0 \\
    \end{array}
  \right),\quad\mbox{and}\quad T=\left(
    \begin{array}{cc}
      0 & 0 \\
      q & 0 \\
    \end{array}
  \right),
$$
with real $s$ and $q$, straightforward computations show that $(\Delta S)_\varphi=|s||\varphi_2|^2$, $(\Delta S)^\dagger_\varphi=|s||\varphi_1|^2$, $(\Delta T)_\varphi=|q||\varphi_1|^2$ and $(\Delta T)^\dagger_\varphi=|q||\varphi_2|^2$, where $\varphi=(\varphi_1,\varphi_2)$, $|\varphi_1|^2+|\varphi_2|^2=1$. We conclude that UR$_2$ is saturated only if $\max\left(|\varphi_1|^2,|\varphi_2|^2\right)=\sqrt{\frac{1}{2}\left||\varphi_1|^2-|\varphi_2|^2\right|}$ , which is always false. On the other hand, UR$_1$ is saturated if $1=\left||\varphi_1|^2-|\varphi_2|^2\right|$, which is always  true if $\varphi_1=1$ or if $\varphi_2=1$.

\subsection{The Swanson model}

We will now show, in a concrete model,  that UR$_1$ and UR$_2$ need not be satisfied at once by a certain vector. This is our version of the analogous result discussed in \cite{arai}.

Let $a$ and $a^\dagger$ be such that $[a,a^\dagger]=\Id$, and let us introduce $S=a\cos(\theta)+ia^\dagger\sin(\theta)$ and $T=a^\dagger\cos(\theta)+ia\sin(\theta)$. Here $\theta$ is an angle which, in the Swanson model, is assumed to belong to the open interval $\left]-\frac{\pi}{4},\frac{\pi}{4}\right[$, \cite{bagpb4}. However, this limitation is not useful here and will not be assumed. Let $\varphi$ be a generic vector in $\D$, and let us introduce the  quantities $C_\varphi=\left<\varphi,a^\dagger a\varphi\right>-\left|\left<\varphi,a\varphi\right>\right|^2$ and $E_\varphi=\Im\left\{\left<\varphi,{a^\dagger}^2 \varphi\right>-\left<\varphi,a^\dagger\varphi\right>^2\right\}$. Hence we find that
$$
\left\{
\begin{array}{ll}
((\Delta S)_\varphi)^2=C_\varphi+\sin^2(\theta)-\sin(2\theta)E_\varphi,\\ ((\Delta S)_\varphi^\dagger)^2=C_\varphi+\cos^2(\theta)-\sin(2\theta)E_\varphi,\\
((\Delta T)_\varphi)^2=C_\varphi+\cos^2(\theta)+\sin(2\theta)E_\varphi,\\ ((\Delta T)_\varphi^\dagger)^2=C_\varphi+\sin^2(\theta)+\sin(2\theta)E_\varphi.\\
\end{array}
\right.
$$
We want to show now that, for particular choices of $\theta$, there exists no state which saturates UR$_1$ or UR$_2$.

Let us first consider $\theta=0$. Hence UR$_2$ is saturated if and only if $C_\varphi=0$, which holds true if, for instance, $\varphi$ is the vacuum of $a$ ($a\varphi=0$) or if $\varphi$ is a coherent state for $a$ ($a\varphi=z\varphi$, for some $z\in\Bbb{C}$ related to $\varphi$). On the other hand, it turns out that UR$_1$ is saturated if and only if $C_\varphi=-\frac{1}{2}$, which is never satisfied since $C_\varphi$ is always (i.e. for all possible $\varphi\in\D$) greater or equal to zero.

Let us now fix $\theta=\frac{\pi}{4}$. In this case UR$_1$ is saturated if $\varphi$ is such that $\sqrt{(C_\varphi+\frac{1}{2})^-E_\varphi^2}=\frac{1}{2}$, while $\varphi$ saturates UR$_2$ if $\sqrt{(C_\varphi+\frac{1}{2})^-E_\varphi^2}=\frac{1}{4}$. Hence, suppose that a vector $\varphi_0$ exists in $\D$ saturating the UR$_2$. For such vector, then, we should have $\sqrt{(C_{\varphi_0}+\frac{1}{2})^-E_{\varphi_0}^2}=\frac{1}{4}$, which is less than $\frac{1}{2}$. Hence $\varphi_0$ does not satisfies UR$_1$, which is impossible. Hence such a vector cannot exist.


\section*{Acknowledgements}

This work was partially supported by the Japan Private School Promotion Foundation and partially by CORI, Universit\`a di Palermo. F.B. and C.T. acknowledge
the warm hospitality of the Department of Applied Mathematics of the Fukuoka University. A.I. acknowledges the hospitality of the Dipartimento di Matematica
e Informatica, Universit\`a di Palermo.

\end{document}